\def \ie {i.e.~}
\def \NP {$\mathcal{NP}$}
\def \bbbz {\mathbb{Z}}
\newtheorem{prop}{Proposition}{\bfseries}{\rmfamily}
\newtheorem{thm}{Theorem}{\bfseries}{\rmfamily}
{\bfseries}{\rmfamily}
\newtheorem{lem}{Lemma}{\bfseries}{\rmfamily}
\newtheorem{cor}{Corollary}{\bfseries}{\rmfamily}
\newtheorem{obs}{Observation}{\bfseries}{\rmfamily}
\newcommand{\ignore}[1]{}
\journal{}
\begin{document}

\begin{frontmatter}

\title{On the additive chromatic number of several families of graphs\tnoteref{grant}}

\author{Daniel Sever\'in\fnref{correspon}}

\address{ Universidad Nacional de Rosario, Argentina \\
          CONICET, Argentina \\
					\texttt{daniel@fceia.unr.edu.ar} }

\tnotetext[grant]{Partially supported by grants PID-ING 416 (UNR), PICT-2013-0586 (MINCYT) and PIP 11220120100277 (CONICET).}
\fntext[correspon]{Corresponding author at Depto. de Matem\'atica, Facultad de Ciencias Exactas y Naturales,
Universidad Nacional de Rosario. \emph{Address}: Pellegrini 250, Rosario, Argentina\\
\emph{E-mail address}: \texttt{daniel@fceia.unr.edu.ar} }

\begin{abstract}
The Additive Coloring Problem is a variation of the Coloring Problem where labels of $\{1,\ldots,k\}$ are assigned to the vertices of a
graph $G$ so that the sum of labels over the neighborhood of each vertex is a proper coloring of $G$.
The least value $k$ for which $G$ admits such labeling is called \emph{additive chromatic number} of $G$.
This problem was first presented by Czerwi\'nski, Grytczuk and \.Zelazny who also proposed a conjecture that for every graph $G$,
the additive chromatic number never exceeds the classic chromatic number.
Up to date, the conjecture has been proved for complete graphs, trees, non-3-colorable planar graphs with girth at least 13 and
non-bipartite planar graphs with girth at least 26.
In this work, we show that the conjecture holds for split graphs.
We also present exact formulas for computing the additive chromatic number for some subfamilies of split graphs
(complete split, headless spiders and complete sun), regular bipartite, complete multipartite, fan, windmill, circuit, wheel,
cycle sun and wheel sun.
\end{abstract}

\begin{keyword}
Additive chromatic number,
Additive coloring conjecture,
Lucky labeling,
Graph algorithms
\end{keyword}

\end{frontmatter}

\section{Introduction}

The \emph{Additive Coloring Problem}, also known as \emph{Lucky Labeling Problem}, was first presented by Czerwi\'nski, Grytczuk and
\.Zelazny in 2009 \cite{LUCKYORIGINAL}. They also proposed the following conjecture:

\medskip
\noindent\textbf{Additive Coloring Conjecture.} \cite{LUCKYORIGINAL} For every graph $G$, $\eta(G) \leq \chi(G)$.
\medskip

Here, $\chi(G)$ is the chromatic number of $G$ and $\eta(G)$ is the additive chromatic number of $G$, defined below.

For a given integer $k$, denote the set $\{1,2,\ldots,k\}$ with $[k]$.
Let $G = (V, E)$ be a finite, undirected and simple graph, and $f : V \rightarrow [k]$ be a labeling of the vertices of $G$.
Denote by $f(S)$ the sum of labels over a set $S \subset V$, \ie $f(S) = \sum_{u \in S} f(u)$.
A labeling is an \emph{additive $k$-coloring} if $f(N(u)) \neq f(N(v))$ for all edges $(u,v) \in E$,
where $N(v)$ is the open neighborhood of $v$.
The \emph{additive chromatic number} of $G$ is defined as the least number $k$ for which $G$ has an
additive $k$-coloring $f$, and is denoted by $\eta(G)$.
The Additive Coloring Problem (ACP) consists of finding such number and is \NP-hard \cite{LUCKYCOMPLEXITY}.

In an attempt to prove the conjecture, several authors gave upper bounds of the additive chromatic number.
For general graphs $G$ with maximum degree $\Delta$, Akbari et al.~proved that $\eta(G) \leq \Delta^2 - \Delta + 1$ \cite{LUCKYCOTASUP}.
For specific families of graphs, we have:
if $G$ is a tree, $\eta(G) \leq 2$ \cite{LUCKYORIGINAL};
if $G$ is planar bipartite, $\eta(G) \leq 3$ \cite{LUCKYORIGINAL};
if $G$ is planar, $\eta(G) \leq 468$ \cite{ADDITIVEPLANAR};
if $G$ is 3-colorable and planar, $\eta(G) \leq 36$ \cite{ADDITIVEPLANAR};
if $G$ is planar with girth at least 13, $\eta(G) \leq 4$ \cite{ADDITIVEPLANAR};
if $G$ is planar with girth at least 26, $\eta(G) \leq 3$ \cite{BRANDT}.

Up to date, the conjecture has been proved for trees, complete graphs (since $\eta(K_n) = n$ \cite{LUCKYORIGINAL}), non-3-colorable
planar graphs with girth at least 13 and non-bipartite planar graphs with girth at least 26.

Our contribution in this work is to give the exact value of the additive chromatic number of several families of graphs and expand the
number of cases in which the conjecture is satisfied.
In addition, we propose a tool which is used for checking the conjecture over all connected graphs up to 10 vertices.

Now, we give some notation used throughout the article.
For each $v \in V$, let $N_G(v)$ be the set of neighbors of $v$ and $d_G(v)$ its degree, \ie $d_G(v) = |N_G(v)|$.
Also, $N_G[v] = N_G(v) \cup \{v\}$.
If $u, v$ are vertices of $G$, we say that $u$ and $v$ are \emph{true twins} if $N_G[u] = N_G[v]$.
Let $D = (V, A)$ be a finite directed graph. For each $v \in V$, define
$N_D(v) = \{(u,v) \in A : u \in V\} \cup \{(v,w) \in A : w \in V\}$. When the graph or digraph is inferred from the context,
we omit the subindex, \ie $d(v), N(v), N[v]$.

Let $D = (V, A)$ be a directed acyclic graph and $G(D)$ be the undirected underlying graph of $D$. We say that $D$
\emph{represents an acyclic orientation} of $G$ if $G(D)$ is isomorphic to $G$.
Let $f: V \rightarrow [k]$ be a labeling of vertices of $D$. If $f(N(u)) < f(N(v))$ for every
$(u, v) \in A$, then $f$ is called \emph{topological additive $k$-numbering} of $D$.
The \emph{topological additive number} of $D$, denoted by $\eta_t(D)$, is defined as the least number $k$ for which $D$ has a topological additive $k$-numbering, or $+\infty$ in case that such $k$ does not exist.
Clearly, $\eta(G) = \min \{ \eta_t(D) ~:~ D~\textrm{represents an acyclic orientation of}~G \}$.

Due to lack of space, we omit the proofs of Propositions. They can be found in an appendix \cite{APPENDIX}
(also at \url{https://arxiv.org/abs/1602.07675}).

\section{Regular bipartite and complete multipartite graphs}

As far as we know, the conjecture has not been proved for general bipartite graphs yet.
We show that the conjecture holds for a subclass of bipartite graphs including regular ones. 

\begin{obs} \label{ADDITIVE1}
Let $G = (V, E)$ be a graph, then $\eta(G) = 1$ if and only if $d(u) \neq d(v)$ for all $(u,v) \in E$.
\end{obs}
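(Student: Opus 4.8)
The plan is to exploit the fact that a label set of size $k=1$ is completely rigid. Since a labeling is a map $f : V \to [1] = \{1\}$, the only available labeling is the constant map $f(v) = 1$ for every $v \in V$; there is no choice to be made. First I would compute the neighborhood sums under this forced labeling: for each $v \in V$ we have $f(N(v)) = \sum_{u \in N(v)} 1 = |N(v)| = d(v)$. Consequently the additive-coloring requirement, namely $f(N(u)) \neq f(N(v))$ for every edge $(u,v) \in E$, collapses exactly to the degree condition $d(u) \neq d(v)$ for every edge.

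With this identification in hand, both implications follow in a single line each. If $d(u) \neq d(v)$ holds on every edge, then the constant labeling is an additive $1$-coloring, and since $\eta(G) \geq 1$ always holds, we conclude $\eta(G) = 1$. Conversely, if $\eta(G) = 1$, then $G$ admits some additive $1$-coloring; but that coloring is necessarily the constant labeling, so its defining inequalities yield $d(u) \neq d(v)$ on every edge.

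I do not expect any genuine obstacle here: the whole content of the statement is the remark that the $k=1$ labeling is unique, after which the equivalence is immediate from the computation $f(N(v)) = d(v)$. The only point I would state with some care is that $\eta(G) = 1$ presupposes the \emph{existence} of a valid additive $1$-coloring, which is what makes the ``only if'' direction non-vacuous and lets us pass from $\eta(G) = 1$ back to the degree condition.
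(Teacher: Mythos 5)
Your proof is correct and is exactly the argument the paper has in mind: the paper states this as an unproved observation precisely because the constant labeling $f \equiv 1$ is forced, making $f(N(v)) = d(v)$ and collapsing the additive-coloring condition to the degree condition. Nothing is missing, and your care about the ``only if'' direction matches the intended (trivial) reasoning.
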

\begin{lem}
Let $G = (U \cup V, E)$ be a bipartite graph ($U$ and $V$ are its stable sets) such that,
for all $v \in V$ and $u \in N(v)$, $d(u) < 2 d(v)$. If $d(u) \neq d(v)$ for all $(u,v) \in E$ then $\eta(G) = 1$,
otherwise $\eta(G) = 2$.
\end{lem}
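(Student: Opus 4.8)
The plan is to dispatch both cases using Observation~\ref{ADDITIVE1} together with a single explicit labeling. First I would record that the first case is immediate: by Observation~\ref{ADDITIVE1}, $\eta(G) = 1$ exactly when $d(u) \neq d(v)$ for every edge $(u,v)$, which is precisely the hypothesis of the first case. The same observation supplies the lower bound needed in the second case: when some edge has equal-degree endpoints, $\eta(G) \neq 1$, so $\eta(G) \geq 2$. Hence the entire content of the lemma reduces to exhibiting one additive $2$-coloring of $G$, which simultaneously yields $\eta(G) \leq 2$ and closes the second case.

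The construction I would use exploits the asymmetry of the hypothesis: assign the label $2$ to every vertex of $U$ and the label $1$ to every vertex of $V$, \ie set $f(u) = 2$ for $u \in U$ and $f(v) = 1$ for $v \in V$. Since $G$ is bipartite, the neighbours of a vertex $u \in U$ all lie in $V$ and the neighbours of a vertex $v \in V$ all lie in $U$; therefore $f(N(u)) = \sum_{v' \in N(u)} 1 = d(u)$ and $f(N(v)) = \sum_{u' \in N(v)} 2 = 2\,d(v)$. For any edge $(u,v)$ with $u \in U$ and $v \in V$ I then need $f(N(u)) = d(u) \neq 2\,d(v) = f(N(v))$, and this is exactly what the hypothesis $d(u) < 2d(v)$ guarantees, the strict inequality ruling out equality. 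Thus $f$ is an additive $2$-coloring and $\eta(G) \leq 2$.

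Combining the two steps finishes the argument: in the first case $\eta(G) = 1$, and in the second case $2 \leq \eta(G) \leq 2$. The only real insight is choosing the labeling that matches the hypothesis — the factor $2$ multiplying $d(v)$ in the condition signals that one stable set should carry label $2$ and the other label $1$ — after which no genuine obstacle remains, the verification being a one-line degree computation. I would, however, flag the one point requiring care: the condition is asymmetric in $U$ and $V$, so the labeling must orient $U$ and $V$ consistently with how the hypothesis is stated (labelling the \emph{wrong} side with $2$, or swapping the roles of the two stable sets, can falsify the comparison), and I would verify this matching explicitly.
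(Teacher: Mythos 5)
Your proposal is correct and follows essentially the same route as the paper: both reduce everything to the upper bound via Observation~\ref{ADDITIVE1}, then use the identical labeling ($f \equiv 2$ on $U$, $f \equiv 1$ on $V$) so that $f(N(u)) = d(u) < 2d(v) = f(N(v))$ across every edge. Your additional remark about the asymmetry of the hypothesis forcing the orientation of the labeling is a fair point of care, but it does not change the argument.
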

\begin{proof}
In virtue of Observation \ref{ADDITIVE1}, we only have to prove $\eta(G) \leq 2$.
Consider the assignment $f:V \rightarrow \{1,2\}$ such that $f(u) = 2$ for all $u \in U$ and $f(v) = 1$
for all $v \in V$. Then, $f(N(u)) = d(u) < 2 d(v) = f(N(v))$ for all $(u,v) \in E$.
\end{proof}
\begin{cor} \label{REGULARBIP}
If $G$ is a regular bipartite graph, then $\eta(G) = 2$.
\end{cor}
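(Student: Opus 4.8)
The plan is to obtain this result as an immediate specialization of the preceding Lemma, so the work is essentially to check that a regular bipartite graph falls into its hypothesis and to identify which of the two conclusions applies. Let $r$ denote the common degree of $G$. Since $G$ is bipartite with stable sets $U$ and $V$, and since I will read ``regular bipartite graph'' as carrying at least one edge, I may assume $r \geq 1$ (the edgeless case $r=0$ is degenerate and gives $\eta(G)=1$ by Observation~\ref{ADDITIVE1}; I would simply note that it is excluded by convention).

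First I would verify the hypothesis $d(u) < 2\,d(v)$ for all $v \in V$ and $u \in N(v)$. Because every vertex has degree $r$, this reduces to $r < 2r$, which holds precisely because $r \geq 1$. Hence the Lemma is applicable to $G$.

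Next I would determine the correct branch of the Lemma. Since $G$ is $r$-regular, every edge $(u,v) \in E$ satisfies $d(u) = d(v) = r$, so it is \emph{not} the case that $d(u) \neq d(v)$ for all edges. We therefore land in the ``otherwise'' case of the Lemma, which yields $\eta(G) = 2$, completing the argument.

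I do not expect any substantial obstacle, since the statement is a direct corollary; the only points meriting a line of care are confirming $r \geq 1$ so that the strict inequality $r < 2r$ is genuine, and observing that regularity forces equal degrees across every edge so that the second branch of the Lemma (rather than the first) is the one that fires.
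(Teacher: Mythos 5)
Your proof is correct and is exactly the argument the paper intends: the corollary is stated as an immediate specialization of the preceding Lemma, with regularity giving $d(u)=r<2r=2d(v)$ so the hypothesis holds, and equal degrees across every edge forcing the ``otherwise'' branch $\eta(G)=2$. Your extra care about the degenerate edgeless case ($r=0$) is a reasonable clarification that the paper leaves implicit.
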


Now, we consider complete multipartite graphs.
We say that a digraph $D$ is \emph{complete $r$-partite} when $G(D)$ is complete $r$-partite.
We say that $D$ is \emph{monotone} when $V(D)$ can be partitioned into subsets $V_1, V_2, \ldots, V_r$ such that
every arc in $V_i \times V_j$ satisfies $i < j$. In order to prove the theorem, we first cite a result given in \cite{IPL2013}:
\begin{lem} \cite{IPL2013} \label{LEMITA}
Let $D$ be a complete $r$-partite digraph. Then, $\eta_t(D) < +\infty$ if and only $D$ is monotone.
In that case,
\[
  \eta_t(D) = \max \biggl\{ \biggl\lceil \dfrac{s_i}{|V_i|} \biggr\rceil : i \in [r] \biggr\},
\]
where $V_1, \ldots, V_r$ is the partition of $V(D)$, $s_r = |V_r|$ and $s_i = \max\{1 + s_{i+1}, |V_i|\}$ for all $i \in [r-1]$.
\end{lem}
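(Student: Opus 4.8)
The plan is to exploit a key structural feature of complete $r$-partite digraphs: for any vertex $v$ lying in part $V_i$, its neighborhood in the underlying graph is exactly $V \setminus V_i$, so $f(N(v)) = f(V) - f(V_i)$ depends only on the part containing $v$ and not on $v$ itself (here $f(V_i) = \sum_{w \in V_i} f(w)$ and $f(V) = \sum_{w \in V} f(w)$). Consequently, for an arc $(u,v)$ with $u \in V_i$ and $v \in V_j$, the defining inequality $f(N(u)) < f(N(v))$ of a topological additive numbering is equivalent to $f(V_j) < f(V_i)$; that is, a numbering is valid precisely when the part-sums strictly decrease along every arc.

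First I would translate this into a condition on the \emph{part-digraph} $H$, whose nodes are $V_1, \ldots, V_r$ and which has an arc $V_i \to V_j$ whenever some arc of $D$ runs from $V_i$ to $V_j$. A labeling yielding a valid numbering exists if and only if one can assign to the parts values $f(V_i)$ that strictly decrease along every arc of $H$, which is possible if and only if $H$ is acyclic. I would then observe that $H$ is acyclic exactly when $D$ is monotone: a monotone ordering of the parts is precisely a topological order of $H$, and conversely mixed arc directions between two parts (or any longer cyclic pattern) force a directed cycle in $H$ and hence a chain of contradictory inequalities $f(V_i) > f(V_j) > \cdots > f(V_i)$. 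This settles the dichotomy $\eta_t(D) < +\infty$ iff $D$ is monotone; note that $D$ itself being acyclic does not prevent $H$ from having a cycle, which is what makes the ``only if'' direction nontrivial.

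For the formula, assume $D$ is monotone and relabel the parts so that all arcs go forward; the requirement becomes $f(V_1) > f(V_2) > \cdots > f(V_r)$ with each part-sum constrained to $|V_i| \leq f(V_i) \leq k|V_i|$, since a sum of $|V_i|$ labels from $[k]$ ranges exactly over the integers in that interval (increment one label at a time to realize every intermediate value). The task is thus to find the least $k$ admitting a strictly decreasing integer sequence $S_1 > \cdots > S_r$ with $S_i \geq |V_i|$ and $S_i \leq k|V_i|$. I would compute the coordinatewise-minimal feasible sequence by backward recursion: $S_r$ can be as small as $|V_r|$, and in general $S_i \geq \max\{1 + S_{i+1}, |V_i|\}$, which is exactly the recurrence defining $s_i$. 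A short backward induction then shows $S_i \geq s_i$ for \emph{every} feasible sequence, so $\lceil S_i/|V_i| \rceil \geq \lceil s_i/|V_i| \rceil$ always holds.

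Finally I would combine the two bounds. Since any feasible numbering forces $k \geq \lceil S_i/|V_i|\rceil \geq \lceil s_i/|V_i|\rceil$ for each $i$, we obtain the lower bound $k \geq \max_i \lceil s_i/|V_i|\rceil$. Conversely, taking $k$ equal to this maximum makes $s_i \leq k|V_i|$ hold for all $i$, so the sequence $(s_i)$ is realizable and yields a valid numbering, giving $\eta_t(D) = \max_i \lceil s_i/|V_i|\rceil$. The main obstacle I anticipate is the optimality argument for the formula, specifically justifying that greedily minimizing each part-sum via the recurrence simultaneously minimizes the governing maximum $\max_i \lceil s_i/|V_i|\rceil$; this relies on monotonicity of the ceiling together with the universal lower bound $S_i \geq s_i$, rather than on any subtle interplay between distinct coordinates.
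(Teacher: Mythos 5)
Your proof is correct. Note, however, that the paper does not actually prove this statement: it is imported verbatim from the reference on Topological Additive Numbering (the lemma is labeled ``\cite{IPL2013}''), so there is no in-paper argument to compare against; your write-up is an independent, self-contained verification. The structure of your argument is the natural one and it is sound: the observation that $f(N(v)) = f(V) - f(V_i)$ for every $v \in V_i$ (valid precisely because the underlying graph is \emph{complete} multipartite) reduces the TAN conditions to strict decrease of the part-sums along arcs of the quotient digraph $H$; acyclicity of $H$ is exactly monotonicity of $D$, and you correctly flag that acyclicity of $D$ itself does not suffice, which is the only trap in the equivalence. For the formula, your two key facts both check out: a sum of $|V_i|$ labels from $[k]$ realizes every integer in $[\,|V_i|, k|V_i|\,]$, and the backward recursion $s_r = |V_r|$, $s_i = \max\{1+s_{i+1}, |V_i|\}$ produces a coordinatewise lower bound on every feasible sequence of part-sums while being itself feasible once $k = \max_i \lceil s_i/|V_i| \rceil$; since the objective $\max_i \lceil S_i/|V_i| \rceil$ is monotone in each coordinate, the coordinatewise-minimal sequence attains the optimum, which dissolves the ``subtle interplay'' worry you raise at the end.
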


\begin{thm}
Let $G=(V_1 \cup \cdots \cup V_r,E)$ be the complete $r$-partite graph ($V_1$,$\ldots$, $V_r$ are its stable sets)
and $|V_i| \geq |V_{i+1}|$ for all $i \in [r-1]$. Then, $\eta(G) = \max \{ \lceil \frac{s_i}{|V_i|} \rceil : i \in [r] \}$ where 
$s_r = |V_r|$ and $s_i = \max\{1 + s_{i+1}, |V_i|\}$ for all $i \in [r-1]$.
Moreover, $\eta(G) \leq r$.
\end{thm}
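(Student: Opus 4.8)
The plan is to use the identity $\eta(G) = \min\{\eta_t(D) : D~\textrm{represents an acyclic orientation of}~G\}$ together with Lemma \ref{LEMITA}. Since $G$ is complete $r$-partite, every acyclic orientation $D$ is a complete $r$-partite digraph, so by Lemma \ref{LEMITA} only the monotone ones contribute a finite value to the minimum. A monotone orientation of a complete multipartite graph is nothing more than a choice of linear ordering of its $r$ stable sets: between any two parts all edges are present and monotonicity forces them to be oriented uniformly from the earlier part to the later one. Thus the minimization over acyclic orientations reduces to minimizing the closed-form expression of Lemma \ref{LEMITA} over all permutations of the parts, and the theorem amounts to showing that ordering the parts by non-increasing size, which is the given ordering $|V_i| \geq |V_{i+1}|$, attains that minimum.

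First I would record a closed form for the auxiliary quantities. Writing $m_i$ for the $i$-th part size under a permutation, an easy induction on $s_i = \max\{1+s_{i+1}, m_i\}$ gives $s_i = \max_{i \leq j \leq r}\{(j-i) + m_j\}$. The bound $\eta(G) \leq r$ is then immediate for the non-increasing ordering $m_i = |V_i|$: for $j \geq i$ one has $|V_j| \leq |V_i|$ and $j-i \leq r-1$, so $s_i \leq (r-1) + |V_i| \leq r\,|V_i|$ (using $|V_i| \geq 1$), whence $\lceil s_i/|V_i| \rceil \leq r$ for every $i$.

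The heart of the argument is an exchange lemma: if two consecutive parts are ``out of order'', i.e.~the earlier one is strictly smaller, then swapping them does not increase $\eta_t$. Concretely, if positions $k,k+1$ carry sizes $p<q$ and we set $R = s_{k+2}$ (which the swap does not affect), the closed form shows $s_k$ changes from $\max\{2+R,1+q\}$ to $\max\{2+R,q\}$, so $s_k$, and hence by monotonicity of the recursion every $s_i$ with $i<k$, does not increase; the sizes at positions $i \neq k,k+1$ are unchanged, so those ratios $\lceil s_i/m_i \rceil$ do not increase either. It remains to control the two affected positions, which I would do by a short case analysis on the relative sizes of $1+R$, $p$ and $q$, checking in each case that $\max\{\lceil s_k/m_k \rceil, \lceil s_{k+1}/m_{k+1} \rceil\}$ does not grow. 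Sorting the parts into non-increasing order by repeated adjacent swaps then proves that this ordering minimizes $\eta_t$ over all permutations, giving the claimed value of $\eta(G)$.

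The main obstacle I anticipate is precisely this last case analysis: under a swap both $s_i$ and $m_i$ move at the two affected positions, and because of the ceilings one cannot argue termwise that the maximum decreases. One must pair the correct ``before'' and ``after'' ratios; the delicate case is when $s_{k+1}$ stays equal to $1+R$ while its denominator shrinks from $q$ to $p$, threatening to inflate $\lceil s_{k+1}/m_{k+1} \rceil$ to $\lceil (1+R)/p \rceil$, but there the ``before'' configuration already carries the larger ratio $\lceil s_k/m_k \rceil = \lceil (2+R)/p \rceil$, which dominates it.
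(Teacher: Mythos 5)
Your proposal is correct, and its skeleton matches the paper's: both reduce the problem via $\eta(G)=\min_D \eta_t(D)$ and Lemma \ref{LEMITA} to minimizing $\max_i \lceil s_i/m_i\rceil$ over orderings of the stable sets, both prove that the non-increasing ordering attains the minimum, and both obtain $\eta(G)\le r$ in essentially the same way (the paper's induction $s_i\le |V_i|(r-i+1)$ is your closed-form estimate $s_i\le (r-i)+|V_i|\le r|V_i|$ in disguise). The genuine difference is the optimality step. The paper argues globally: for an arbitrary monotone $D'$ it shows $s'_t\ge s_t$ for all $t$, picks a maximizer $i$ of $s_i/|V_i|$, lets $j$ be the least position of $D'$ holding a part of size $|V_i|$, asserts $j\le i$, and deduces $\lceil s'_j/|V'_j|\rceil \ge \lceil s_i/|V_i|\rceil$. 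You argue locally: an adjacent swap repairing an inversion never increases the objective, and bubble sort does the rest. The delicate point is exactly where you located it, and your resolution is right: after the swap the two affected ratios are $\lceil \max\{2+R,q\}/q\rceil$ and $\lceil \max\{1+R,p\}/p\rceil$, and each is at most the pre-swap ratio $\lceil \max\{2+R,1+q\}/p\rceil$ at position $k$ (or equals $1$), so the maximum cannot grow.

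What your route buys is, in fact, soundness: the paper's assertion $j\le i$ is false in general, and with it the paper's chain of inequalities. Take $G=K_{3,3,3,2,2,2,2}$. In the sorted order the $s$-values are $(8,7,6,5,4,3,2)$, the unique maximizer of $s_i/|V_i|$ is $i=1$ with ratio $8/3$, and $\eta_t(D)=3$. For $D'$ ordered by sizes $(2,2,2,2,3,3,3)$ the $s'$-values are $(9,8,7,6,5,4,3)$ and the first part of size $3$ sits at $j=5>1=i$; moreover $\lceil s'_j/|V'_j|\rceil=\lceil 5/3\rceil=2<3=\lceil s_i/|V_i|\rceil$, so the second inequality in the paper's concluding line $\eta_t(D')\ge \lceil s'_j/|V'_j|\rceil\ge\lceil s_i/|V_i|\rceil$ simply fails. (The theorem itself survives: $\eta_t(D')=5\ge 3$, but this is witnessed at position $1$ of $D'$, which the paper's argument never examines.) Your exchange lemma never needs to track where a given part size lands in $D'$, so it avoids this trap entirely; when you write it up, the swap case analysis you sketched is the heart of the proof and should be spelled out in full.
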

\begin{proof}
Let $D$ be the monotone digraph such that $G(D) = G$ and the partition of $V(D)$ is $V_1, V_2, \ldots, V_r$.
We must prove that $D$ represents the acyclic orientation of $G$ that provides the lowest value of $\eta_t(D)$.
Let $D'$ be another digraph representing an acyclic orientation of $G$ with $\eta_t(D') < \infty$.
Therefore, $D'$ is a monotone complete $r$-partite digraph
where $G(D')$ is isomorphic to $G$ and the partition of $V(D')$ is $V'_i = V_{\mathfrak{p}(i)}$ for all $i \in [r]$ where
$\mathfrak{p}:[r] \rightarrow [r]$ is some permutation function.
Define $s_i$ and $s'_i$ for $D$ and $D'$ respectively as in Lemma \ref{LEMITA}.
It is easy to verify that sequences $\{s_i\}_{i \in [r]}$ and $\{s'_i\}_{i \in [r]}$ are decreasing.
We now prove that $s'_i \geq s_i$ for all $i$, by induction on $i = r, r-1, \ldots, 1$.
For the case $i = r$ the statement is straightforward.
For $i < r$, suppose that $s'_i < s_i$.
By inductive hypothesis, $1+s_{i+1} \leq s'_i < |V_i|$.
Since $s'_i \geq |V'_j|$ for any $j \geq i$, then $|V_i| > |V'_j|$.
Therefore $\mathfrak{p}$ is not the identity function, and there exists some $k < i$ such that $\mathfrak{p}^{-1}(k) \geq i$,
implying that $|V_i| > |V'_{\mathfrak{p}^{-1}(k)}| = |V_k|$.
This leads to a contradiction as the sets from $\{V_i\}_{i \in [r]}$ are arranged by size in decreasing order.
Thus, $s'_i \geq s_i$ for all $i \in [r]$.

Let $i$ be an integer such that $s_i/|V_i|$ is maximum and $I = \{ t \in [r] : |V_t| = |V_i|\}$. Note that $i$ is the minimum index of $I$.
Let $J = \{ t \in [r] : |V'_t| = |V_i|\}$ and $j$ be the minimum index of $J$. Due to the ordering in the cardinality of sets of $V(D)$,
$i \geq j$. Hence, $s'_j \geq s'_i \geq s_i$.
Since $j \in J$, $|V'_j| = |V_i|$ and we obtain $s'_j/|V'_j| \geq s_i/|V_i|$.
Therefore, $\eta_t(D') \geq \lceil s'_j/|V'_j| \rceil \geq \lceil s_i/|V_i| \rceil = \eta_t(D)$.

Now, we show that $\eta(G) \leq r$. We first prove by induction on $i$ that $s_i \leq |V_i|(r - i + 1)$ for
$i = r, r-1, \ldots, 1$. In first place, if $i = r$, clearly $s_r = |V_r| = |V_r|(r - r + 1)$.
If $i < r$, just two cases are possible.
If $s_i = |V_i|$, clearly $s_i \leq |V_i|(r - i + 1)$.
Otherwise, $s_i = 1 + s_{i+1}$. By the inductive hypothesis $s_{i+1} \leq |V_{i+1}|(r - i)$ and the
fact that $|V_i| \geq |V_{i+1}|$, we obtain:
\[ s_i = 1 + s_{i+1} \leq 1 + |V_{i+1}|(r - i) \leq |V_{i+1}|(r - i + 1) \leq |V_i|(r - i + 1). \]
Hence, $\bigl\lceil \frac{s_i}{|V_i|} \bigr\rceil \leq r - i + 1 \leq r$ for all $i$ and therefore $\eta(G) \leq r$.
\end{proof}

Since $\chi(G) \geq r$ for any complete $r$-partite graph $G$, we conclude that the conjecture holds for these graphs.

\section{Join with complete graphs}

Let $G_1, G_2$ be disjoint graphs. The \emph{join} of $G_1$ with $G_2$, denoted $G_1 \lor G_2$, is defined as the resulting graph $G'$
satisfying $V(G') = V(G_1) \cup V(G_2)$ and $E(G') = E(G_1) \cup E(G_2) \cup \{ (u,v) : u \in V(G_1),~ v \in V(G_2)\}$.
Given a graph $G$, the following theorem allows to solve the ACP of a join of $G$ with a complete graph by just solving the ACP of $G$.

\begin{obs} \label{LOWERBOUNDTWINS}
Let $G = (V, E)$ be a graph and $T \subset V$ such that any $u,v \in T$ are true twins of $G$.
Then, $\eta(G) \geq |T|$.
\end{obs}
\begin{thm} \label{UNIVERTICES}
Let $G$ be a graph of $n$ vertices and $\Delta$ be the largest degree in $G$.
Then, $\eta(G \lor K_q) = \max \{ \eta(G), q \}$ for all $q \leq n-\Delta-1$.
\end{thm}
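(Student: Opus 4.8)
The plan is to establish the two inequalities $\eta(G \lor K_q) \ge \max\{\eta(G), q\}$ and $\eta(G \lor K_q) \le \max\{\eta(G), q\}$ separately; write $H = G \lor K_q$ and let $W$ be the vertex set of the $K_q$ part. For the lower bound I would first observe that any two vertices of $W$ are true twins in $H$, since each is adjacent to every other vertex of $H$; Observation~\ref{LOWERBOUNDTWINS} then gives $\eta(H) \ge q$. To obtain $\eta(H) \ge \eta(G)$ --- which is \emph{not} a consequence of subgraph monotonicity, as ACP lacks that property --- I would take an arbitrary additive $k$-coloring $f$ of $H$ and restrict it to $V(G)$. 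For an edge $(u,v)$ of $G$ we have $N_H(u) = N_G(u) \cup W$ and $N_H(v) = N_G(v) \cup W$, so the common term $f(W)$ cancels and $f(N_H(u)) \ne f(N_H(v))$ forces $f(N_G(u)) \ne f(N_G(v))$. Hence the restriction is an additive $k$-coloring of $G$, so $k \ge \eta(G)$, giving $\eta(H) \ge \max\{\eta(G), q\}$.

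For the upper bound, set $k = \max\{\eta(G), q\}$ and begin from an optimal additive coloring $g$ of $G$, which uses labels in $[\eta(G)] \subseteq [k]$. I would keep $g$ on $V(G)$ and only choose labels for the $q$ vertices of $W$. Writing $S = g(V(G))$ and $T$ for the sum of the labels placed on $W$, the three families of edge constraints of $H$ simplify as follows: (i) edges inside $G$ reduce, after cancelling $T$, to $g(N_G(u)) \ne g(N_G(v))$, which holds because $g$ is additive; (ii) edges inside $K_q$ reduce to requiring that the labels on $W$ be pairwise distinct; and (iii) an edge between $v \in V(G)$ and $w \in W$ reduces to $g(N_G(v)) + T \ne S + T - f(w)$, i.e. $f(w) \ne S - g(N_G(v))$. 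Thus the whole problem boils down to choosing $q$ distinct labels from $[k]$ that avoid the forbidden set $B = \{ S - g(N_G(v)) : v \in V(G) \}$.

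I expect constraint (iii) to be the main obstacle, because $B$ may contain as many as $n$ values while $k$ can be much smaller than $n$. The observation that makes everything work is that $S - g(N_G(v)) = g\bigl(V(G) \setminus N_G(v)\bigr)$, and since $v$ itself is a non-neighbor this set contains at least $n - d_G(v) \ge n - \Delta$ vertices, each carrying a label at least $1$; hence every element of $B$ is at least $n - \Delta$. Consequently the forbidden labels that actually lie in $[k]$ are confined to $\{ n-\Delta, \ldots, k \}$, numbering at most $k - (n - \Delta) + 1$, so at least $k - (k - n + \Delta + 1) = n - \Delta - 1 \ge q$ labels of $[k]$ remain admissible. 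Selecting any $q$ distinct admissible labels for $W$ satisfies (i), (ii) and (iii) simultaneously and produces an additive $k$-coloring of $H$, which finishes the upper bound. As a sanity check, when $k = q$ one has $q \le n - \Delta - 1 < n - \Delta$, so $B$ is disjoint from $[k]$ and the labels $1, \ldots, q$ work immediately; the counting above is really needed only in the regime $k = \eta(G) > q$.
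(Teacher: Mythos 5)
Your proof is correct and takes essentially the same approach as the paper: the lower bound is obtained identically (true twins give $\eta \geq q$, and restricting a coloring of $G \lor K_q$ to $G$ gives $\eta \geq \eta(G)$), and the upper bound extends an optimal coloring of $G$ to the clique using the same key inequality $g(V(G) \setminus N_G(v)) \geq n - d_G(v) \geq n - \Delta > q$. The only difference is cosmetic: the paper simply assigns the labels $1, \ldots, q$ to the clique vertices, which by your own inequality always lie below every forbidden value, so your closing counting argument over admissible labels is in fact never needed.
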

\begin{proof}
Let $V$ and $E$ be the set of vertices and edges of $G$ respectively, $U = \{ u_1, u_2, \ldots, u_q \}$ be the set of vertices of $K_q$,
$G' = G \lor K_q$ and $f$ be an optimal additive coloring of $G$.
Consider a labeling $f'$ of $G'$ satisfying $f'(v) = f(v)$ for all $v \in V$,
and $f'(u_i) = i$ for all $i \in [q]$. Now, for any $(v,v') \in E$,
$f'(N_{G'}(v)) = f(N_G(v)) + f'(U) \neq f(N_G(v')) + f'(U) = f'(N_{G'}(v'))$.
For any $i,j \in [q]$ such that $i < j$, $f'(N_{G'}(u_i)) = f(U \cup V) - i > f(U \cup V) - j = f'(N_{G'}(u_j))$.
Finally, note that $f'(V \backslash N_G(v)) \geq n - d_G(v) \geq n - \Delta$ for all $v \in V$. Then, for any $u \in U$ and $v \in V$,
$f'(N_{G'}(u)) = f'(U \cup V) - f'(u) \geq f'(U \cup V) - q > f(U \cup V) - n + \Delta \geq f'(U \cup V) - f'(V \backslash N_G(v)) =
f'(N_{G'}(v))$.
Therefore, $f'$ is an additive coloring of $G'$.

In order to prove optimality, note first that any two vertices in $U$ are true twins of $G'$.
By Observation \ref{LOWERBOUNDTWINS}, $\eta(G') \geq q$.
In addition, suppose that $\eta(G') < \eta(G)$. Hence, there exists an additive $k$-coloring $f'$ of $G'$ with $k = \eta(G)-1$.
Let $f$ be the labeling of $G$ satisfying $f(v) = f'(v)$ for all $v \in V$.
We have $f(N_G(v)) = f'(N_{G'}(v)) - f'(U) \neq f'(N_{G'}(v')) - f'(U) = f(N_G(v'))$ for any $(v,v') \in E$.
Therefore, $f$ is an additive $k$-coloring of $G$ which leads to a contradiction.
\end{proof}

When Theorem \ref{UNIVERTICES} is applied one must keep in mind that the size of a complete graph that can be joined to a graph is
limited by $n-\Delta-1$. In fact, if one chooses $q = n-\Delta$, $\eta(G \lor K_q) = \max \{ \eta(G), q \}$ does no longer hold.
For instance, consider the graph $G^*$ of Figure \ref{DEFINITIONS} and $q = 2$. It can be proven that $\eta(G^*)=2$ and $\eta(G^* \lor K_2)=3$.
On the other hand, there are graphs $G$ such that $\eta(G \lor K_q) = \max \{ \eta(G), q \}$ for any $q$. An example is the family of
stable graphs. In that case, $G \lor K_q$ is called complete split. In the next section, we prove that the additive chromatic number of
complete splits is $q$.

The theorem also shows that if the conjecture holds for a graph $G$ then it still holds for $G \lor K_q$ (with
$q \leq n-\Delta-1$) since $\chi(G \lor K_q) = \chi(G) + q$.\\

A vertex $v$ is \emph{universal} in a graph $G$ when $N(v) = V(G) \backslash \{v\}$.
Note that, if $G$ is a graph without universal vertices, the theorem asserts that $\eta(G \lor K_1) = \eta(G)$.
This is the case of fan, windmill and wheel graphs.

Let $n$ be an integer such that $n \geq 3$. A $n$-\emph{fan} is defined as $F_n = P_{n+1} \lor K_1$ where $P_{n+1}$ is a path of length
$n$. Since $\eta(P_{n+1}) = 2$, 
$\eta(F_n) = 2$.

Let $n, m$ be integers such that $n \geq 3$, $m \geq 2$. The \emph{windmill} graph $W_n^m$ is defined as $m$ copies of $K_n$ which
share a single vertex, \ie $W_n^m = m K_{n-1} \lor K_1$. Then, $\eta(W_n^m) = n-1$.

Let $n$ be an integer such that $n \geq 4$. A \emph{wheel} is defined as $W_n = C_n \lor K_1$, where $C_n$ is a circuit of $n$ vertices.
First, we have to known $\eta(C_n)$. If $n$ is even, $C_n$ is regular bipartite. By Corollary \ref{REGULARBIP}, $\eta(W_n) = \eta(C_n) = 2$.
\begin{prop}
Let $n$ be an odd integer such that $n \geq 5$. Then, $\eta(C_n) = 3$.
\end{prop}
Therefore, if $n$ is odd then $\eta(W_n) = 3$.

\section{Split graphs}

A graph $G=(V,E)$ is a \emph{split graph} if $V$ can be partitioned in subsets $Q,S$ such that $Q$ is a clique of $G$ and $S$ is
a stable set of $G$. We denote vertices of $Q$ with $u_1,\ldots,u_q$ and vertices of $S$ with $v_1,\ldots,v_s$.
W.l.o.g. we assume that $Q$ is maximal (unless stated otherwise).
The following result states an upper bound of the additive chromatic number of split graphs.

\begin{thm} \label{COTASUPSPLIT}
Let $G = (Q \cup S, E)$ be a split graph where $Q$ is maximal and $T \subset Q$ be a non-empty set such that the degrees of each vertex
of $T$ differ each other. Then, $\eta(G) \leq |Q|-|T|+1$.
\end{thm}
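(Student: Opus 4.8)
The plan is to exhibit an explicit additive $k$-coloring of $G$ with $k = |Q| - |T| + 1$. Write $q = |Q|$, $t = |T|$ and $k = q - t + 1$, and for each clique vertex $u \in Q$ let $s(u) = |N(u) \cap S|$ denote its number of neighbours in the stable set. Since every vertex of $Q$ is adjacent to all the others, $d(u) = (q-1) + s(u)$, so the hypothesis that the vertices of $T$ have pairwise distinct degrees is equivalent to their having pairwise distinct values $s(u)$. I would assign the \emph{maximum} label to every stable vertex, $f(v) = k$ for all $v \in S$, and choose the labels of the clique vertices from $[k]$ in a way described below. The two families of edges to control are the clique edges (both endpoints in $Q$) and the cross edges (one endpoint in $Q$, one in $S$); there are no edges inside $S$.

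I would dispose of the cross edges first, as this is where maximality of $Q$ enters. For $u \in Q$ one has $f(N(u)) = f(Q) - f(u) + k\, s(u)$, whereas for $v \in S$ one has $f(N(v)) = f(N(v) \cap Q)$. If $(u,v)$ is a cross edge then $v$ is an $S$-neighbour of $u$, so $s(u) \ge 1$ and hence $f(N(u)) \ge f(Q) - f(u) + k \ge f(Q)$, the last step because $f(u) \le k$. On the other hand, maximality of $Q$ guarantees that no stable vertex is adjacent to all of $Q$, so $N(v) \cap Q \subsetneq Q$ and therefore $f(N(v)) \le f(Q) - 1 < f(Q)$. Thus every clique endpoint of a cross edge has a strictly larger neighbourhood sum than its stable endpoint, and all cross edges are satisfied \emph{regardless} of the chosen clique labels.

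It remains to make the sums $f(N(u)) = f(Q) - f(u) + k\, s(u)$ pairwise distinct over $u \in Q$. Here the key observation is that, because the labels $f(u)$ lie in $[k]$, we have $|f(u) - f(u')| \le k-1 < k$, so two clique vertices can collide only when $s(u) = s(u')$ \emph{and} $f(u) = f(u')$ simultaneously. Hence it suffices to give distinct labels to clique vertices sharing the same value of $s$. Grouping $Q$ by the value of $s$, I would colour each group $G_c = \{u \in Q : s(u) = c\}$ injectively with labels from $[k]$; different groups may reuse labels freely.

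The only point left to verify — and the crux of the argument — is that no group is too large, i.e. $|G_c| \le k$ for every $c$. This is exactly where the set $T$ is used: since the vertices of $T$ have pairwise distinct $s$-values, each group contains at most one vertex of $T$, whence $|G_c| \le 1 + |Q \setminus T| = 1 + (q - t) = k$. Consequently each group admits an injective labelling into $[k]$, the construction goes through, and $\eta(G) \le k = |Q| - |T| + 1$. I expect this counting step — bounding the group sizes through the distinct-degree condition on $T$ — to be the main point, with the two inequalities above being routine bookkeeping.
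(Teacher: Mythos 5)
Your proposal is correct and takes essentially the same approach as the paper: both put the maximum label $k = |Q|-|T|+1$ on every vertex of $S$, use maximality of $Q$ to make cross edges automatic (stable vertices miss at least one clique vertex, so their sums fall below $f(Q)$, while clique vertices with an $S$-neighbour reach at least $f(Q)$), and resolve clique edges via the same divisibility observation that $f(N(u)) = f(Q) - f(u) + k\,s(u)$ can collide only when both the label and the $S$-degree coincide. The paper just fixes one concrete labeling (distinct labels $1,\ldots,q-t$ on $Q\setminus T$, label $k$ on $T\cup S$) and verifies it case by case, whereas you describe the whole family of admissible labelings (injective on each $s$-group, with the counting bound $|G_c|\le k$); this is a mild generalization in presentation, not a different argument.
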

\begin{proof}
W.l.o.g. let $T = \{u_{q-t+1},u_{q-t+2},\ldots,u_{q-1}, u_q\}$ where $t = |T|$.
We exhibit an additive $(q-t+1)$-coloring of $G$.
Consider the assignment $f:V \rightarrow [q-t+1]$ such that $f(u_i) = i$ for all $i \in [q-t]$, $f(w) = q-t+1$ for all $w \in T \cup S$.
We first check for edges between the clique and the stable set. Let $(u_i, v) \in E$.
Since $Q$ is maximal, for each $v \in S$, there exists $u(v) \in Q$ such that $v$ is not adjacent to $u(v)$.
Then, $f(N(v)) \leq f(Q) - f(u(v)) \leq f(Q) - 1$. On the other hand, let $r_i = |N(u_i) \cap S|$ for all $i \in [q]$.
Since $v \in N(u_i)$, $r_i \geq 1$ and $f(N(u_i)) = f(Q) - f(u_i) + (q-t+1).r_i \geq f(Q)$.
Therefore, $f(N(u_i)) > f(N(v))$.

Now, we check for edges into the clique.
First consider an edge $(u_j, u_k)$ such that $u_j, u_k \in T$. Then, $r_j \neq r_k$ and
$f(N(u_j)) = f(Q) - (q-t+1) + (q-t+1).r_j \neq f(Q) - (q-t+1) + (q-t+1).r_k = f(N(u_k))$.
Finally consider an edge $(u_j, u_k)$ such that $j \in [q-t]$ and $j < k$. Let $\alpha = f(u_k) - f(u_j)$.
Note that $1 \leq \alpha \leq q-t$.
Then, $f(N(u_j)) - f(N(u_k)) = \alpha + (q-t+1).(r_j - r_k)$. Suppose that $(q-t+1).(r_j - r_k) = \alpha$.
Hence, $1 \leq (q-t+1).(r_j - r_k) \leq q-t$. This contradicts $r_j - r_k \in \bbbz$. Therefore, $f(N(u_j)) \neq f(N(u_k))$.
\end{proof}

Observe that $\eta(G) \leq |Q| \leq \chi(G)$, so the conjecture holds for split graphs.
Now, we will see some subfamilies of split graphs where the exact value of $\eta(G)$ can be computed directly.

A \emph{complete split} is a graph $G = (Q' \cup S', E)$ with $|Q'| \geq 1$, $|S'| \geq 2$, $Q'$ is a clique of $G$ and there are
edges $(u, v)$ for all $u \in Q'$ and $v \in S'$. In these graphs, the bound given in Theorem \ref{COTASUPSPLIT} is tight.

\begin{prop}
Let $G = (Q' \cup S', E)$ be a complete split. Then, $\eta(G) = |Q'|$.
\end{prop}

For the next families, we use a result given in \cite{IPL2013}:
\begin{lem} \cite{IPL2013}
Let $D = (V,A)$ be a directed acyclic graph such that its vertices are ordered so that $(u, v) \in A$ implies $u < v$.
If $Q$ is a clique of $G(D)$ and $q_F$, $q_L$ are the smallest and largest vertices of $Q$ respectively, then
\[
  \eta_t(D) \geq \biggl\lceil \dfrac{d(q_F)+1}{d(q_L)-|Q|+2} \biggr\rceil.
\]
\end{lem}
\begin{cor} \label{COTAINF}
Let $G$ be a graph and $Q$ be a clique of $G$. If $d_1$, $d_2$ are the degrees of the vertices of $Q$ with smallest and largest degree
respectively, then
\[
  \eta(G) \geq \biggl\lceil \dfrac{d_1+1}{d_2-|Q|+2} \biggr\rceil.
\]
\end{cor}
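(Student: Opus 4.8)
The plan is to reduce the statement to the Proposition by means of the identity $\eta(G) = \min\{\eta_t(D) : D~\textrm{represents an acyclic orientation of}~G\}$. First I would fix an arbitrary directed acyclic graph $D$ representing an acyclic orientation of $G$, together with a topological ordering of its vertices so that $(u,v) \in A$ implies $u < v$. Since $G(D)$ is isomorphic to $G$, the set $Q$ remains a clique of $G(D)$ and the degree of every vertex in $D$ coincides with its degree in $G$. Letting $q_F$ and $q_L$ be the smallest and largest vertices of $Q$ in this ordering, the Proposition immediately gives $\eta_t(D) \geq \lceil (d(q_F)+1)/(d(q_L)-|Q|+2) \rceil$.

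The core step is then to bound this rounded quantity below by $\lceil (d_1+1)/(d_2-|Q|+2) \rceil$, uniformly over all choices of $D$. Because $q_F, q_L \in Q$, the definitions of $d_1$ and $d_2$ yield $d(q_F) \geq d_1$ and $d(q_L) \leq d_2$. Moreover, since $Q$ is a clique, $q_L$ is adjacent to the remaining $|Q|-1$ vertices of $Q$, so $d(q_L) \geq |Q|-1$ and hence the denominator $d(q_L)-|Q|+2 \geq 1$ is strictly positive. Enlarging the numerator and shrinking a positive denominator both increase the fraction, so
\[
  \dfrac{d(q_F)+1}{d(q_L)-|Q|+2} \geq \dfrac{d_1+1}{d(q_L)-|Q|+2} \geq \dfrac{d_1+1}{d_2-|Q|+2}.
\]
Invoking the monotonicity of the ceiling function then gives the desired inequality between the two rounded bounds.

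Finally, since the resulting value $\lceil (d_1+1)/(d_2-|Q|+2) \rceil$ does not depend on $D$, it is a lower bound for $\eta_t(D)$ over every acyclic orientation of $G$, and therefore for their minimum, which equals $\eta(G)$; this establishes the claim. The only delicate point is to keep track of the directions of the two inequalities and, in particular, to confirm that the denominator stays positive so that the comparison of fractions is legitimate. Once the clique property is used to guarantee this positivity, the remainder is routine.
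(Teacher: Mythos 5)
Your proposal is correct and follows exactly the intended derivation: the paper states Corollary~\ref{COTAINF} without an explicit proof, treating it as an immediate consequence of the Proposition applied to an arbitrary acyclic orientation $D$, combined with the identity $\eta(G) = \min\{\eta_t(D)\}$ and the monotonicity observations $d(q_F) \geq d_1$, $d(q_L) \leq d_2$. Your careful check that $d(q_L)-|Q|+2 \geq 1$ (so the comparison of fractions is legitimate) is the right point of caution, and the rest matches the paper's implicit argument.
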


A \emph{thin headless spider} of order $q \geq 2$ is a split graph where $|Q|=|S|=q$ and the set of edges between $Q$ and $S$ is
$\{(u_i, v_i) ~:~ i \in [q] \}$.
Figure \ref{DEFINITIONS} shows an example of a thin spider of order 5.

A \emph{thick headless spider} of order $q \geq 2$ is a split graph where $|Q|=|S|=q$ and the set of edges between $Q$ and $S$ is
$\{(u_i, v_j) ~:~ i,j \in [q],~ i \neq j \}$.\\
Equivalently, a thick headless spider is the complement of a thin headless
spider of the same order and vice-versa.

\begin{prop}
Let $G$ be a thin/thick headless spider of order $q$. Then,
$$\eta(G) = \biggl\lceil \dfrac{q+1}{2} \biggr\rceil.$$
\end{prop}

Let $G$ be a graph and $U = \{u_1, \ldots, u_m\} \subset V(G)$. A \emph{sun} is a graph $G'$ obtained from $G$ as follows:
$V(G') = V(G) \cup V$ where $V = \{v_1, \ldots, v_m\}$ and $$E(G') = E(G) \cup \{(u_i, v_{i-1}), (u_i, v_i) ~:~ i \in [m] \}.$$
For the sake of simplicity, $u_0$ and $v_0$ are another names for $u_m$ and $v_m$.
A \emph{complete sun} of order $m$, denoted by $KS_m$, is a split graph obtained from a complete graph $G$ of size $m$.

\begin{prop}
Let $m \geq 3$. Then, $\eta(KS_m) = \bigl\lceil \frac{m+2}{3} \bigr\rceil.$
\end{prop}

\section{Other suns}

In this section, we study \emph{cycle suns} $CS_m$, \ie
when the sun is obtained from a circuit ($V(G) = U$ and $E(G) = \{ (u_i, u_{i-1}) ~:~ i \in [m] \}$),
and \emph{wheel suns} $WS_m$, \ie when the sun is obtained from a wheel ($V(G) = U \cup \{w\}$ and
$E(G) = \{ (u_i, u_{i-1}), (u_i, w) ~:~ i \in [m] \}$).
Figure \ref{DEFINITIONS} displays a wheel graph with $m = 5$.

\begin{prop}
Let $m \geq 4$. Then, $\eta(CS_m) = \eta(WS_m) = 2$.
\end{prop}
Clearly, the conjecture is satisfied in these graphs.

\begin{figure}[ht]
\begin{center}
\includegraphics[scale=0.80]{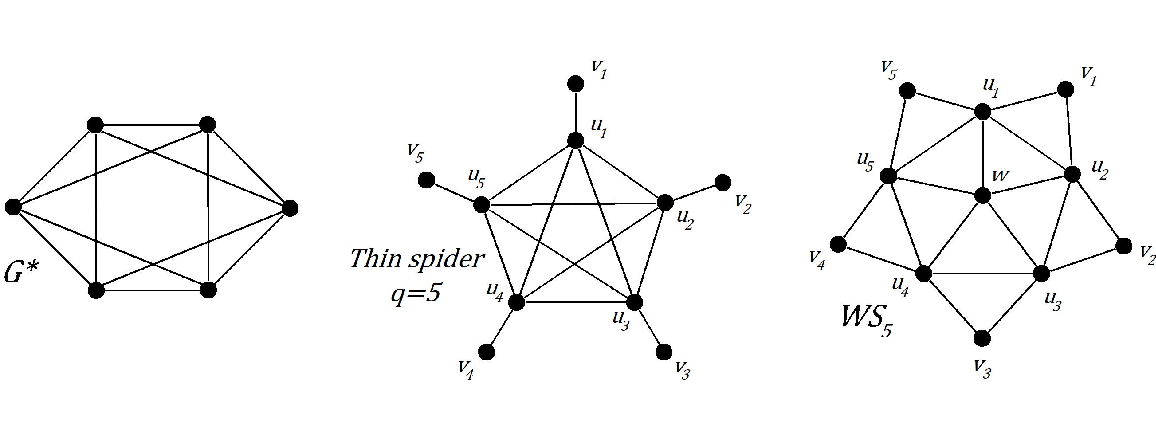}
\end{center}
\vspace{-30pt}
\caption{Some graphs: $G^*$, thin spider and wheel sun of order 5.}
\label{DEFINITIONS}
\end{figure}

\section{A tool for solving the ACP}

As far as we know, there are no tools available for solving ACP. However, we can solve instances of this problem by modeling it
as an integer linear programming formulation and using an available solver like CPLEX.
The source code of such tool can be downloaded from \cite{APPENDIX}.
Besides this tool has been very useful for checking our theoretical results, we have tested the conjecture over all connected graphs up to
10 vertices (about 12 million graphs).

\medskip

\noindent \textbf{Acknowledgements.}
I wish to thank Dr. Graciela Nasini for their helpful comments.


\newpage

\begin{center}
\Large \textbf{Appendix}
\end{center}

\setcounter{section}{0}
\setcounter{prop}{0}
\renewcommand{\thesection}{\Alph{section}}

\section{Proofs of Propositions}

\begin{prop}
Let $n$ be an odd integer such that $n \geq 5$. Then, $\eta(C_n) = 3$.
\end{prop}
\begin{proof}
Let $V = \{v_1, \ldots, v_n\}$ and suppose that $f:V \rightarrow \{1,2\}$ is an additive
2-coloring of $C_n$. Then, $f$ is also a topological additive 2-numbering of a certain digraph $D$ such that $G(D) = C_n$.
Observe that $f(N(v)) \in \{2,3,4\}$ for all $v \in V$.
Since $C_n$ is not bipartite, there must be an oriented path of 3 consecutive vertices in $D$.
W.l.o.g.~assume that $f(N(v_2)) < f(N(v_3)) < f(N(v_4))$. Then, $f(v_1) + f(v_3) = f(N(v_2)) = 2$ and we obtain $f(v_3) = 1$.
But, $f(v_3) + f(v_5) = f(N(v_4)) = 4$ giving $f(v_5) = 3$ which is an absurd. Therefore, $\eta(C_n) \geq 3$.

Consider the assignment $f:V \rightarrow [3]$ such that $f(v_2) = f(v_4) = f(v_5) = 1$, $f(v_1) = 2$, $f(v_3) = 3$
and, if $n \geq 7$, then for $i \geq 6$, $f(v_i) = 1$ if $i$ is even and $f(v_i) = 3$ if $i$ is odd. We obtain
$f(N(v_1)) = 2$ if $n=5$ and $f(N(v_1)) = 4$ otherwise, $f(N(v_2)) = 5$, $f(N(v_3)) = 2$, $f(N(v_4)) = 4$ and
$f(N(v_n)) = 3$. If $n \geq 7$, $f(N(v_5)) = 2$ and $f(N(v_6)) = 4$. If $n \geq 9$, then for $i \in \{7,\ldots,n-1\}$,
$f(N(v_i)) = 2$ if $i$ is odd and $f(N(v_i)) = 6$ if $i$ is even. Thus, $f$ is an additive 3-coloring of $C_n$.  
\end{proof}

\begin{prop}
Let $G = (Q' \cup S', E)$ be a complete split. Then, $\eta(G) = |Q'|$.
\end{prop}
\begin{proof}
Since $G$ has $|Q'|$ true twins, $\eta(G) \geq |Q'|$. On the other hand, let $v \in S'$ and $Q = Q' \cup \{v\}$. Here, $Q$ is
a maximal clique of $G$. By applying Theorem \ref{COTASUPSPLIT}
with $T = \{u, v\}$ where $u \in Q'$, we obtain $\eta(G) = |Q|-1 = |Q'|$.
\end{proof}

\begin{prop}
Let $G$ be a thin/thick headless spider of order $q$. Then,
$$\eta(G) = \biggl\lceil \dfrac{q+1}{2} \biggr\rceil.$$
\end{prop}
\begin{proof}
For the sake of simplicity, we call $r = \lceil \frac{q+1}{2} \rceil$.
We start by proving $\eta(G) = r$ when $G$ is thin.
Note that $d(u_i) = q$ for all $i$. In virtue of Corollary \ref{COTAINF}
we have $\eta(G) \geq r$.
Then, we only need to propose an additive $r$-coloring of $G$.
If $q = 2$, consider the additive 2-coloring $f$ such that $f(u_1) = f(u_2) = f(v_1) = 1$ and $f(v_2) = 2$.
If $q \geq 3$, consider the assignment $f:V \rightarrow [r]$ such that $f(u_i) = r - i + 1$ and $f(v_i) = 1$ for all $i \in [r]$,
and $f(u_i) = q - i + 1$ and $f(v_i) = \lfloor \frac{q+1}{2} \rfloor$ for all $i \in \{r+1,\ldots,q\}$.
We obtain $f(N(u_i)) = f(Q) - f(u_i) + f(v_i) = f(Q) - r + i$ for all $i \in [q]$.
Then, for $j < k$, we have $f(N(u_j)) < f(N(u_k))$.
Regarding the edge $(u_i, v_i)$, we first analyze when $i = 1$.
Note that $f(u_1) = r$, $f(u_2) = r - 1$ and $f(u_q) = 1$, then
$f(N(u_1)) = f(Q) - r + 1 \geq f(u_1) + f(u_2) + f(u_q) - r + 1 = r + 1 > r = f(N(v_1))$.
If $i \geq 2$, $f(N(u_i)) > f(N(u_1)) > f(N(v_1)) = r \geq f(u_i) = f(N(v_i))$.

Now, we consider that $G$ is thick. If $q = 2$ then $G$ is isomorphic to a thin headless spider of order 2. Hence,
assume that $q \geq 3$. Consider the assignment $f:V \rightarrow [r]$ such that $f(u_i) = i$ and $f(v_i) = 1$ for all $i \in [r]$,
and $f(u_i) = r$ and $f(v_i) = i-r+1$ for all $i \in \{r+1,\ldots,q\}$.
We obtain $f(N(u_i)) = f(V) - f(u_i) - f(v_i) = f(V) - i - 1$ for all $i \in [q]$.
Then, for $j < k$, we have $f(N(u_j)) > f(N(u_k))$.
Regarding the edge $(u_j, v_k)$, $j \neq k$, we prove that $f(N(v_j)) \leq f(N(v_1)) < f(N(u_q)) \leq f(N(u_k))$.
As $q \geq k$, the right inequality holds.
The left inequality holds since $f(N(v_j)) = f(Q) - f(u_j) \leq f(Q) - 1 = f(N(v_1))$.
The middle inequality, \ie $f(N(v_1)) < f(N(u_q))$, holds if and only if $f(V) - f(Q) > q$. Observe that
\[ f(V) - f(Q) = \sum_{i=1}^q f(v_i) = r + \sum_{i=r+1}^q (i + 1 - r) = q + \dfrac{(q-r)^2 + q - r}{2} > q. \]
We finish by proving that $\eta(G) \geq r$. Suppose that there exists an additive $(r-1)$-coloring $f$ of $G$.
Recall that $f(N(u_i)) = f(V) - f(u_i) - f(v_i)$ for all $i \in [q]$. Thus, $f(V) - (2r-2) \leq f(N(u_i)) \leq f(V) - 2$. Since there are
$2r-3$ integers in the range of feasible values for $f(N(u_i))$ and $2r-3 < q$, there are two indexes $j$ and $k$ such that
$f(N(u_j))=f(N(u_k))$ by the pigeonhole principle, leading to a contradiction.
\end{proof}

\begin{prop}
Let $m \geq 3$. Then, $\eta(KS_m) = \bigl\lceil \frac{m+2}{3} \bigr\rceil.$
\end{prop}
\begin{proof}
For the sake of simplicity, we call $r = \lceil \frac{m+2}{3} \rceil$.
Note that $d(u_i) = m+1$ for all $i$. In virtue of Corollary \ref{COTAINF}
we have $\eta(G) \geq r$.

We only have to propose an additive $r$-coloring of $KS_m$.
First, define a permutation function $\mathfrak{p}:[m] \rightarrow [m]$ as follows: $\mathfrak{p}(1) = 1$,
$\mathfrak{p}(j) = \frac{j}{2} + 1$ for $j = 2,\ldots,m$ and $j$ even,
$\mathfrak{p}(j) = m - \frac{j-3}{2}$ for $j = 3,\ldots,m$ and $j$ odd.
Clearly, its inverse is: $\mathfrak{q}(1) = 1$,
$\mathfrak{q}(i) = 2(i-1)$ for $i = 2,\ldots,\lfloor \frac{m}{2} \rfloor + 1$,
$\mathfrak{q}(i) = 3 + 2(m-i)$ for $i = \lfloor \frac{m}{2} \rfloor + 2,\ldots,m$.
Let $f$ be the following assignment:
\[ f(u_i) = \begin{cases}
   r,             & m \equiv 2~ (\textrm{mod}~3) ~\land~ i = \mathfrak{p}(m), \\
   \biggl\lfloor \dfrac{\mathfrak{q}(i)}{3} \biggr\rfloor + 1,             & \textrm{otherwise}.
\end{cases} \]
\[ f(v_i) = \begin{cases}
   r + 1 - \biggl\lceil \dfrac{\mathfrak{q}(i)}{3} \biggr\rceil,   & i = 1 ~\lor~ i \geq \biggl\lfloor \dfrac{m}{2} \biggr\rfloor + 2, \\
   2,                                                              & m \equiv 2~ (\textrm{mod}~6) ~\land~ i = \mathfrak{p}(m), \\
   r + 1 - \biggl\lceil \dfrac{\mathfrak{q}(i)+2}{3} \biggr\rceil, & \textrm{otherwise}.
\end{cases} \]
It is easy to check that $f(w) \in [r]$ for all $w \in U \cup V$. Also, observe that first and second case in the definition of $f(v_i)$
do not overlap: if $m \equiv 2~ (\textrm{mod}~6)$, $m$ is even and, therefore,
$2 \leq \mathfrak{p}(m) = m/2 + 1 < \lfloor \frac{m}{2} \rfloor + 2$.

We claim that $f(v_i)$ satisfies the following recursive relationship:
$$f(v_i) = 2r - \mathfrak{q}(i) + f(u_i) - f(v_{i-1}),~~~ \forall~ i \in [m].$$
Then, $f(N(u_i)) = f(U) - f(u_i) + f(v_i) + f(v_{i-1}) = f(U) + 2r - \mathfrak{q}(i)$ for all $i$.
Since $\mathfrak{q}$ is injective, $f(N(u_i)) \neq f(N(u_k))$ for all $i \neq k$.
Regarding edges between $U$ and $V$, note that $f(U) > m$ and for any $v \in V$, $v$ has degree 2, then
$f(N(v)) \leq 2r < f(U) + 2r - m \leq f(U) + 2r - \mathfrak{q}(i) = f(N(u_i))$ for all $i$.
Therefore, $f$ is an additive $r$-coloring of $KS_m$.

Now, we check our claim. If $i \neq \mathfrak{p}(m) = \lceil \frac{m}{2} \rceil + 1$ or $m \not\equiv 2~ (\textrm{mod}~3)$,
then $f(u_i) - \mathfrak{q}(i) = 1 - \lceil \frac{2\mathfrak{q}(i)}{3} \rceil$. That is, we have to check
$f(v_i) = 2r + 1 - \lceil \frac{2\mathfrak{q}(i)}{3} \rceil - f(v_{i-1})$.
In the case that $m \equiv 2~ (\textrm{mod}~3)$ and $i = \mathfrak{p}(m) = \lceil \frac{m}{2} \rceil + 1$,
$f(u_i) - \mathfrak{q}(i) = r - m$ and we have to check $f(v_i) = 3r - m - f(v_{i-1})$.
\begin{enumerate}
\item Case $i = 1$: Since $f(v_0) = f(v_m) = r$,
$f(v_1) = r + 1 - \lceil \frac{1}{3} \rceil = 2r + 1 - \lceil \frac{2}{3} \rceil - r$.
\item Case $i = 2$: Since $f(v_1) = r$,
$f(v_2) = r + 1 - \lceil \frac{4}{3} \rceil = 2r + 1 - \lceil \frac{4}{3} \rceil - r$.
\item Case $i = 3,\ldots,\lfloor \frac{m}{2} \rfloor$ or ``$i = \lfloor \frac{m}{2} \rfloor + 1$ when $m \not\equiv 2~ (\textrm{mod}~6)$'':
First, we prove $1 - \lceil \frac{2(i-1)+2}{3} \rceil = \lceil \frac{2(i-1)}{3} \rceil - \lceil \frac{4(i-1)}{3} \rceil$.
If $i \equiv 1~ (\textrm{mod}~ 3)$, let $h = \frac{i-1}{3}$. Then, $1 - \lceil \frac{2(i-1)+2}{3} \rceil = 1 - 2h - \lceil \frac{2}{3} \rceil
= 2h - 4h = \lceil \frac{2(i-1)}{3} \rceil - \lceil \frac{4(i-1)}{3} \rceil$.
Cases when $i \equiv 0$ or $2~ (\textrm{mod}~ 3)$ are analogous. Since $f(v_{i-1}) = r + 1 - \lceil \frac{2(i-1)}{3} \rceil$,
$f(v_i) = r + 1 - \lceil \frac{2(i-1)+2}{3} \rceil = 2r + 1 - \lceil \frac{4(i-1)}{3} \rceil - r - 1 + \lceil \frac{2(i-1)}{3} \rceil$.
\item Case $i = \lfloor \frac{m}{2} \rfloor + 1$ when $m \equiv 2~ (\textrm{mod}~6)$:
Then, $r = \lceil \frac{m+2}{3} \rceil = \frac{m}{3} + 1$, $\mathfrak{q}(i) = m$,
$\mathfrak{q}(i-1) = m-2$, $f(v_{i-1}) = r + 1 - \lceil\frac{m-2+2}{3}\rceil = 1$ and $f(v_i) = 2 = 3r - m - 1$.
\item Case $i = \lfloor \frac{m}{2} \rfloor + 2$: If $m$ is even, $\mathfrak{q}(i) = m - 1$ and $\mathfrak{q}(i-1) = m$.
If $m \not\equiv 2~ (\textrm{mod}~3)$, $f(v_{i-1}) = r + 1 - \lceil \frac{m+2}{3} \rceil = 1$. Note that
$2r - \lceil \frac{2(m-1)}{3} \rceil = 2$.
If $m \equiv 2~ (\textrm{mod}~3)$, then $m \equiv 2~ (\textrm{mod}~6)$ and, therefore, $f(v_{i-1}) = 2$ and
$2r - \lceil \frac{2(m-1)}{3} \rceil = 3$.
Then, $f(v_i) = r + 1 - \lceil \frac{m - 1}{3} \rceil = 2 = 2r + 1 - \lceil \frac{2(m - 1)}{3} \rceil - f(v_{i-1})$;
If $m$ is odd, $\mathfrak{q}(i) = m$ and $\mathfrak{q}(i-1) = m-1$,
If $m \not\equiv 2~ (\textrm{mod}~3)$, note that $1 - \lceil \frac{m}{3} \rceil = \lceil \frac{m+1}{3} \rceil - \lceil \frac{2m}{3} \rceil$.
Then, $f(v_i) = r + 1 - \lceil \frac{m}{3} \rceil = 2r + 1 - \lceil \frac{2m}{3} \rceil - r - 1 + \lceil \frac{m-1+2}{3} \rceil$.
If $m \equiv 2~ (\textrm{mod}~3)$, $r - 1 = \lceil \frac{m+2}{3} \rceil - 1 = \lceil \frac{m+1}{3} \rceil = \lceil \frac{m}{3} \rceil$ and
$f(v_{i-1}) = r + 1 - \lceil \frac{m-1+2}{3} \rceil = 2$. Then, $f(v_i) = r + 1 - \lceil \frac{m}{3} \rceil = 2 = 3r - m - f(v_{i-1})$.
\item Case $i = \lfloor \frac{m}{2} \rfloor + 3, \ldots, m-1$: We have
$f(v_i) = r + 1 - \lceil \frac{3 + 2(m-i)}{3} \rceil$ and
$2r + 1 - \lceil \frac{2\mathfrak{q}(i)}{3} \rceil - f(v_{i-1}) =
r - \lceil \frac{6 + 4(m-i)}{3} \rceil + \lceil \frac{3 + 2(m-i+1)}{3} \rceil$.
To prove that both expressions are equal, we proceed as in the third case.
\end{enumerate}
\end{proof}

\begin{prop}
Let $m \geq 4$. Then, $\eta(CS_m) = \eta(WS_m) = 2$.
\end{prop}
\begin{proof}
By Observation \ref{ADDITIVE1}
$\eta(CS_m) \geq 2$ and $\eta(WS_m) \geq 2$ so we only have to propose an additive 2-coloring
of $CS_m$ and $WS_m$. We start with $CS_m$.

Consider an assignment $f:V \rightarrow \{1,2\}$ such that $f(u_i) = 2$ if $i$ is odd, $f(u_i) = 1$ if $i$ is even and
$f(v) = 1$ for all $v \in V \backslash \{v_1\}$.
If $m$ is even, also assign $f(v_1) = 1$. Thus, $f(N(u_i)) = 4$ if $i$ is odd, $f(N(u_i)) = 6$ if $i$ is even and
$f(N(v)) = 3$ for all $v \in V$.
If $m$ is odd, assign $f(v_1) = 2$. In this case, $f(N(u_1)) = 6$, $f(N(u_2)) = 7$, $f(N(u_m)) = 5$ and for $i = 3,\ldots,m-1$,
$f(N(u_i)) = 4$ if $i$ is odd and $f(N(u_i)) = 6$ if $i$ is even. In addition, $f(N(v_m)) = 4$ and $f(N(v)) = 3$ for all
$v \in V \backslash \{v_m\}$. Therefore, $f$ is an additive 2-coloring of $CS_m$.

For $WS_m$, assume that $m \neq 5$ and consider the same assignment as before plus $f(w) = 1$. Then,
values of $f(N(v))$ remains the same as in $CS_m$, values of $f(N(u))$ are the same as in $CS_m$ plus one, \ie
$f(N_{WS_m}(u)) = f(N_{CS_m}(u))+1$, and $f(N(w)) = \lceil 3m/2 \rceil$. If $m = 4$, clearly $f$ is an additive 2-coloring of $WS_4$.
If $m \geq 6$, $f(N(w)) > 8 \geq f(N(u))$ and $f$ is an additive 2-coloring of $WS_m$.

For $m = 5$, we propose a different additive 2-coloring of $WS_5$: $f(u_1) = f(u_2) = f(u_4) = f(v_4) = f(v_5) = 1$,
$f(u_3) = f(u_5) = f(v_1) = f(v_2) = f(v_3) = f(w) = 2$. Then, $f(N(v_1)) = 2$, $f(N(v_i)) = 3$ for $i \in \{2,\ldots,5\}$,
$f(N(u_5)) = 6$, $f(N(w)) = 7$, $f(N(u_1)) = f(N(u_3)) = 8$ and $f(N(u_2)) = f(N(u_4)) = 9$.
\end{proof}

\section{Notes on the tool for solving the ACP}

Let $G = (V,E)$ be a graph, $E_2 = \{(u,v), (v,u) : (u,v) \in E\}$ (edges occur in both directions), integer variables $k$ and $f(v)$
for all $v \in V$, and binary variables $z(u,v)$ for all $(u,v) \in E_2$, where $z(u,v) = 1$ if and only if $f(N(u)) < f(N(v))$.
The following integer programming formulation $\mathscr{F}$ computes $\eta(G)$:
\vspace{-35pt}
\begin{center} \begin{align*}
\min k & & \notag \\
\textrm{subject to} & & \\
 & f(N(u)) - f(N(v)) + M_{uv} z(u,v) \leq M_{uv} - 1, & \forall~~(u,v) \in E_2 \\
 & z(u,v) + z(v,u) = 1, & \forall~~(u,v) \in E \\
 & 1 \leq f(v) \leq UB, & \forall~~v \in V \\
 & f(v) \leq k, & \forall~~v \in V \\
 & z(u,v) \in \{0, 1\}, & \forall~~(u,v) \in E_2 \\
 & k, f(v) \in \bbbz_+, & \forall~~v \in V
\end{align*}
\end{center}
where $M_{uv} = 1+|N(u)\backslash N(v)|UB - |N(v)\backslash N(u)|$ for all $(u,v) \in E_2$ and $UB$ is an upper bound of $\eta(G)$.

Additional inequalities can be considered in order to improve the performance of the optimization.
In particular, the initial relaxation of $\mathscr{F}$ can be reinforced by adding these valid inequalities:
\begin{equation*}
z(v,w) + z(w,u) \leq 1,~~~ \textrm{for all}~ u,v,w~ \textrm{such that}~ (u,v) \notin E_2,~ w \in N(u) \subset N(v).
\end{equation*}
Note that if $z(v,w) = z(w,u) = 1$ then $f(N(v)) < f(N(w)) < f(N(u))$, which leads to a contradiction.
We call $\mathscr{F}_1$ to the formulation with these inequalities.

On the other hand, symmetrical solutions arising from the presence of twin vertices can be partially removed with the following
procedure.
Let $\mathscr{C}$ be a partition of $V$, where each element of $\mathscr{C}$ can be: 1) a single vertex, 2) two or more false twins
each other, and 3) two or more true twins each other. Then, for every set of false twins $\{v_1, \ldots, v_t\} \in \mathscr{C}$ add
inequalities $f(v_i) \leq f(v_{i+1}),~~\forall~i \in [t-1]$ and remove variables $z(u,v_i)$, $z(v_i,u)$ and constraints where they occur
for all $i \in 2,\ldots,t$ and $u \in N(v_1)$. Analogously, for every set of true twins $\{v_1, \ldots, v_t\} \in \mathscr{C}$
add inequalities $f(v_i) \leq f(v_{i+1}) - 1,~~\forall~i \in [t-1]$ and remove variables $z(v_i, v_j)$ and
constraints where they occur for all $i,j = 2,\ldots,t$ such that $i \neq j$.
We call $\mathscr{F}_2$ to the resulting formulation after applying this procedure to $\mathscr{F}_1$.

A suitable partition $\mathscr{C}$ can be generated as follows.
First, compute a partition $\mathscr{C'}$ of $V$ into maximal sets of true twins. Let $\mathscr{C}_1 \subset \mathscr{C'}$
composed only of singleton sets and $V' = \bigcup_{W \in \mathscr{C}_1} W$ (\ie $V' = \{ v \in V : \{v\} \in \mathscr{C'} \}$).
Then, compute a partition $\mathscr{C''}$ of $V'$ into maximal sets of false twins.
Finally, do $\mathscr{C} \leftarrow (\mathscr{C'} \backslash \mathscr{C}_1) \cup \mathscr{C''}$.\\

We have run an experiment in order to know the size of graphs where ACP can be solved with our approach.
A computer with an Intel i5 CPU 750@2.67GHz, Visual Studio 2013 and IBM ILOG CPLEX 12.6 has been used for the experiment.
For each instance, the upper bound given by Akbari et al.~(\ie $UB=\Delta^2 - \Delta + 1$) \cite{LUCKYCOTASUP} is computed.
A limit of two hours is imposed to the optimization. In Table \ref{TABLITA}, we show the time in seconds needed to solve 27 random instances
(3 per vertices-density combination) with $\mathscr{F}_1$. These instances were generated by starting from the empty graph of $n$ vertices and adding edges with
probability $p$, where $n \in \{20,25,30\}$ and $p \in \{0.25, 0.5, 0.75\}$ (low, medium and high density, respectively).
A mark ``$-$'' means that the instance could not be solved in the term of two hours.
The last three rows display results over instances of 50 vertices generated by adding 20 true, 20 false or a mix of 10 true and false twins
to the random instances of 30 vertices with medium density. Time is reported in the form $\alpha(\beta)$ where $\alpha$ is the time
consumed by $\mathscr{F}_2$ (including the procedure to generate partition $\mathscr{C}$) and $\beta$ the time consumed by $\mathscr{F}_1$.

As we can see, the tool is able to solve almost all instances of 30 vertices, where harder ones are those with higher density of
edges. In addition, the presence of twin vertices makes instances easier to solve, specially when $\mathscr{F}_2$ is chosen.\\

\begin{table}
\begin{tabular}{|c|c||c@{\hspace{3pt}}c|c@{\hspace{3pt}}c|c@{\hspace{3pt}}c|}
\hline
Vertices & Density & Edges & Time & Edges & Time & Edges & Time \\
\hline
    & Low    &  55 & 0.03 &  40 & 0.02 &  49 & 0.02 \\
 20 & Med. &  88 & 0.09 &  96 & 0.17 &  80 & 0.08 \\
    & High   & 125 & 14.1 & 145 & 2.48 & 138 & 3.47 \\
\hline
    & Low    &  67 & 0.03 &  83 & 0.11 &  76 & 0.03 \\
 25 & Med. & 161 &  332 & 148 & 8.32 & 168 &  896 \\
    & High   & 237 & 54.3 & 216 &  198 & 231 & 7.09 \\
\hline
    & Low    & 130 & 0.20 & 113 & 0.05 & 104 & 1.31 \\
 30 & Med. & 223 & 1696 & 213 &  666 & 219 & 53.5 \\
    & High   & 327 &  $-$ & 316 &  $-$ & 313 &  733 \\
\hline
 30+20f      & & 503 & 0.12(0.23) & 573 & 0.23(0.39) & 559 & 0.19(0.44) \\
 30+10t+10f  & & 668 & 0.61(0.84) & 738 & 1.06(6.45) & 724 & 0.52(12.4) \\
 30+20t      & & 713 & 1.56(4.92) & 783 & 2.38(7.05) & 769 & 3.78(5.63) \\
\hline
\end{tabular}
\caption{Time in seconds needed to solve random instances}
 \label{TABLITA}
\end{table}

Besides this tool has been very useful for checking our theoretical results, we have tested the conjecture over all connected graphs up to
10 vertices (about 12 million graphs). These instances are provided by Brendan McKay:
\begin{center}
\url{http://users.cecs.anu.edu.au/~bdm/data/graphs.html}
\end{center}
while a DSATUR code by Rhyd Lewis have been used for obtaining $\chi(G)$:
\begin{center}
\url{http://rhydlewis.eu/resources/gCol.zip}
\end{center}

For each instance, we assign its chromatic number to $UB$ and solve $\mathscr{F}_2$ (indeed, it is not necessary to reach optimality;
the solver is interrupted when a feasible solution is found). The test fails if an infeasible model is reached, which means that a counterexample to the conjecture is found.
Fortunately, it finished with success (\ie the conjecture is valid for the set of graphs tested).
In particular, all graphs of 9 vertices (261080) were solved in
285 seconds and all graphs of 10 vertices (11716571) were solved in 25027 seconds. The average of time elapsed for each instance of
9 and 10 vertices is 1.09 and 2.14 milliseconds respectively.\\   

\noindent \textbf{Closed additive colorings.}
Because of Additive Coloring Conjecture, it is natural to wonder if a similar conjecture can be proposed by considering a local
identification problem where closed neighborhoods are used instead of open ones. More specifically, consider a graph $G$ without true twins
and call \emph{closed additive $k$-coloring} to a labeling $f : V \rightarrow [k]$ such that $f(N[u]) \neq f(N[v])$ for all edges
$(u,v) \in E$. Then, denote with $\eta[G]$ the least number $k$ for which $G$ has a closed additive $k$-coloring.
We wonder whether the inequality $\eta[G] \leq \chi(G)$ holds for all graph $G$ without true twins, since there are several cases where
this happens (for instance, $\eta[G] = 1$ if and only if $\eta(G) = 1$, then $\eta[G] \leq \chi(G)$ for graphs $G$ such that $\eta[G] = 1$).
We have seen empirically that the inequality always holds for graphs up to 10 vertices.
Unfortunately, we found a family of counterexamples such that $\eta[G] - \chi(G)$ can be as large as one wishes.
Indeed, let $K_n$ be a complete graph of $n$ vertices, with $n \geq 4$. Construct a graph $G$ by replacing every edge $(u,v)$ of $K_n$ by 
a path $\{u, w^1_{uv}, w^2_{uv}, v\}$. The assignment $f(v) = 3$ for all $v \in V(K_n)$, and $f(w^i_{uv}) = i$ for all $(u,v) \in E(K_n)$
is a 3-coloring of $G$. In fact, $\chi(G) = 3$ since $G$ contains a $C_9$.
Now, if $f'$ is a closed additive $k$-coloring of $G$, then $f'(N[w^1_{uv}]) \neq f'(N[w^2_{uv}])$ for all
$(u,v) \in E(K_n)$. Hence, $f'(u) \neq f'(v)$ and each vertex from $V(K_n)$ must have a different labeling. Therefore, $\eta[G] \geq n$.
Figure \ref{COUNTEREXAMPLE} displays the smallest counterexample generated with this construction, which has 16 vertices.

\begin{figure}[ht]
\begin{center}
\includegraphics[scale=0.10]{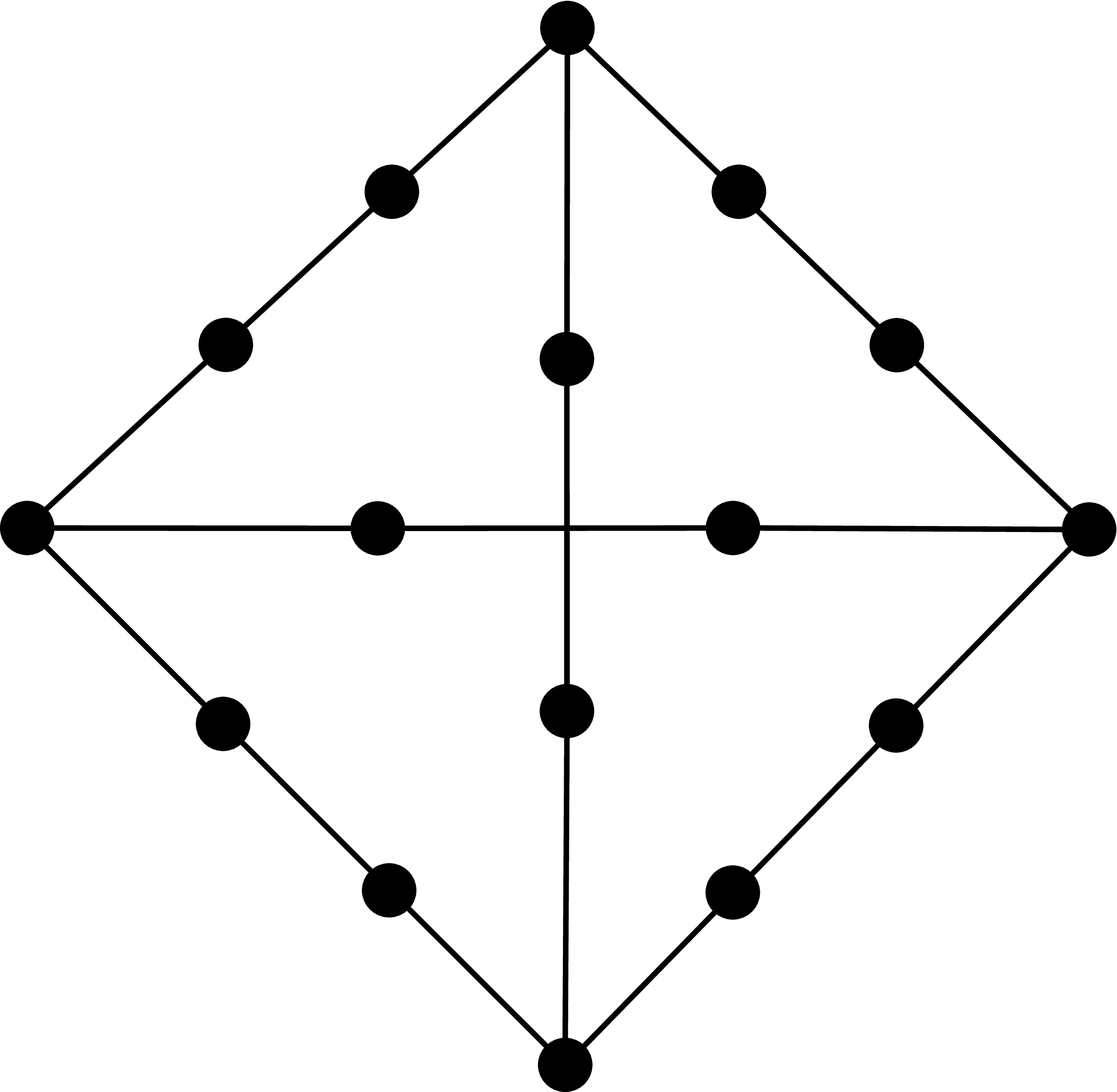}
\end{center}
\vspace{-15pt}
\caption{Graph $G$ such that $\eta[G] = 4 > 3 = \chi(G)$}
\label{COUNTEREXAMPLE}
\end{figure}

A formulation $\mathscr{F}'$ that computes $\eta[G]$ is obtained by replacing constraints $f(N(u)) - f(N(v)) + M_{uv} z(u,v) \leq M_{uv} - 1$ in
$\mathscr{F}$ by $f(N[u]) - f(N[v]) + M_{uv} z(u,v) \leq M_{uv} - 1$, where $M_{uv} = 1+|N[u]\backslash N[v]|UB - |N[v]\backslash N[u]|$
for all $(u,v) \in E_2$. We have also run the experiment with formulation $\mathscr{F}'$ in order to check if $\eta[G] \leq \chi(G)$ for
all connected graphs up to 10 vertices and without true twins. Again, the test finished with success although it took more time, probably
because no additional valid inequalities have been added to $\mathscr{F}'$ (as in the case of $\mathscr{F}$ and $\mathscr{F}_2$).
In particular, all graphs of 9 vertices (197772) were
solved in 724 seconds and all graphs of 10 vertices (9721362) were solved in 46062 seconds. The average of time elapsed for each
instance of 9 and 10 vertices is 3.66 and 4.74 milliseconds respectively.   

About this parameter, $\eta[G]$, M. Axenovich, J. Harant, J. Przybyło, R. Soták, M. Voigt and J. Weidelich published \emph{A note on adjacent vertex distinguishing colorings of graphs} (Discr. Appl. Math. \textbf{205} (2016) 1--7) where they show that
$\eta[G] \leq \Delta^2 - \Delta + 1$ (the same bound of Akbari et al.~for $\eta(G)$ \cite{LUCKYCOTASUP}) and another family of
counterexamples for the conjecture among other results concerning this parameter.
They also extend the definition for the cases where $G$ has true twins.

\section{Examples of graphs and its optimal additive colorings}

Figure \ref{MOREDEF} presents examples of a 4-fan, a windmill graph, a wheel, a complete sun, a cycle sun and a headless thick spider
of order 5. In the drawing of the thick spider, the edges $(u_j,v_k)$ with
$j \neq k$ have been removed for the sake of clarity.
Instead, a dashed line connecting $u_j$ and $v_j$ have been added to remember that these vertices are not connected.

Also additive colorings are shown on Figure \ref{EXAMPLES}:
on the left column, the labelings of an optimal additive coloring are displayed, while on the right,
the values of $f(N(v))$ for each vertex $v$ are reported.

\begin{figure}[ht]
\begin{center}
\includegraphics[scale=0.75]{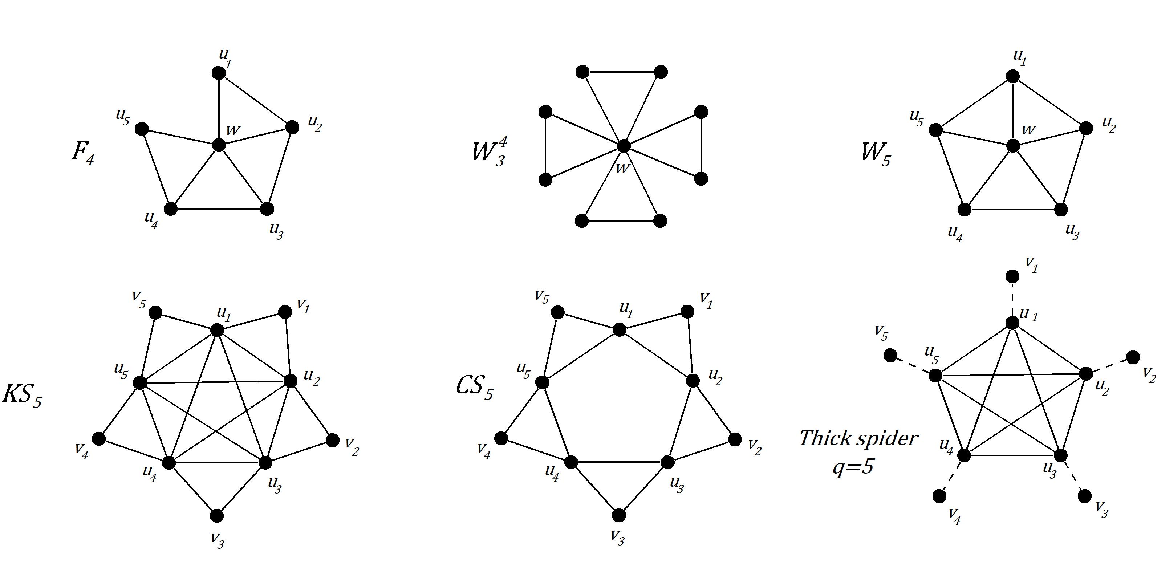}
\end{center}
\caption{Examples of graphs}
\label{MOREDEF}
\end{figure}

\begin{figure}[ht]
\begin{center}
\includegraphics[scale=0.75]{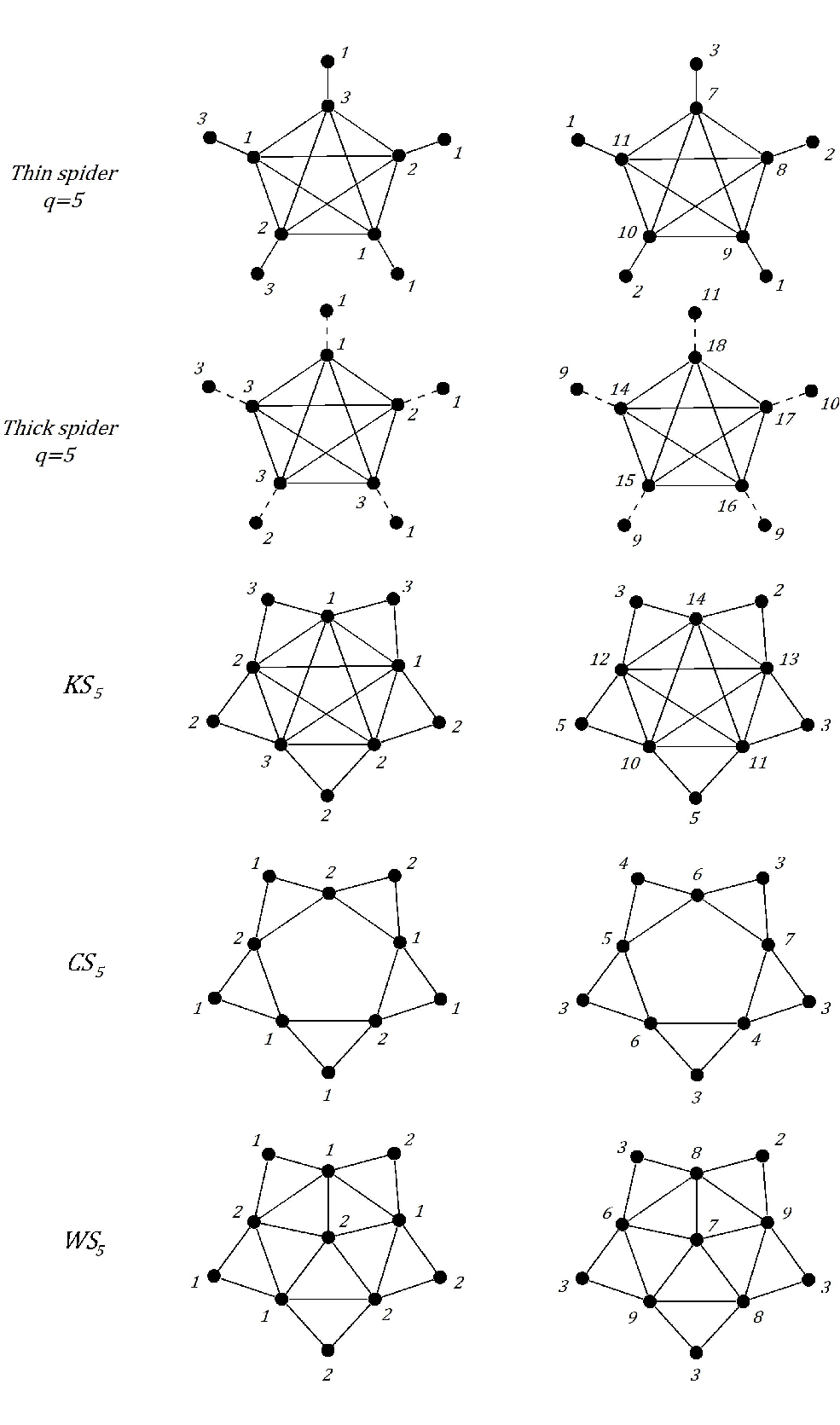}
\end{center}
\caption{Additive colorings of spiders and suns}
\label{EXAMPLES}
\end{figure}


\begin{thebibliography}{10}\label{bibliography}

\bibitem{LUCKYCOMPLEXITY}
A. Ahadi, A. Dehghan, M. Kazemi, E. Mollaahmadi,
\emph{Computation of lucky number of planar graphs is NP-hard},
Inf. Process. Lett. \textbf{112} (2012) 109--112.

\bibitem{LUCKYCOTASUP}
S. Akbari, M. Ghanbari, R. Manaviyat, S. Zare,
\emph{On the Lucky Choice Number of Graphs},
Graphs Comb. \textbf{29} (2013) 157--163.

\bibitem{ADDITIVEPLANAR}
T. Bartnicki, B. Bosek, S. Czerwi\'nski, J. Grytczuk, G. Matecki. W. \.Zelazny,
\emph{Additive colorings of planar graphs},
Graphs Comb. \textbf{30} (2014) 1087--1098.

\bibitem{BRANDT}
A. Brandt, S. Jahanbekam, J. White,
\emph{Additive list coloring of planar graphs with given girth},
Discuss. Math. Graph Theory (in press) 1--19. \url{https://doi.org/10.7151/dmgt.2156}

\bibitem{LUCKYORIGINAL}
S. Czerwi\'nski, J. Grytczuk and W. \.Zelazny,
\emph{Lucky labelings of graphs},
Inf. Process. Lett. \textbf{109} (2009) 1078--1081.

\bibitem{IPL2013}
J. Marenco, M. Mydlarz. D. Sever\'in,
\emph{Topological Additive Numbering of Directed Acyclic Graphs},
Inf. Process. Lett. \textbf{115} (2015) 199--202.

\bibitem{APPENDIX}
D. Sever\'in,
\emph{Appendix and source code of ACP solver for the paper On the additive chromatic number of several families of graphs},
Mendeley Data, v1 (2020) \url{http://dx.doi.org/10.17632/9zwm2nxvbs.1} 

\end{thebibliography}
\end{document}